\begin{document}

\newtheorem{thm}{\textsl{Theorem}}[section]
\newtheorem{dfn}[thm]{\textsl{Definition}}
\newtheorem{prp}[thm]{\textsl{Proposition}}
\newtheorem{cor}[thm]{\textsl{Corollary}}
\newtheorem{lem}[thm]{\textsl{Lemma}}
\newtheorem{rmk}[thm]{\textsl{Remark}}
\newtheorem{ex}[thm]{\textsl{Example}}

\numberwithin{equation}{section}

\title{\textbf{Spectral risk measures and uncertainty}}

\author{%
\name{Mohammed Berkhouch\footnote{\textit{Corresponding author. \url{https://orcid.org/0000-0001-7207-4517}}.}}
\address{\textit{LISAD, ENSA, Ibn Zohr University, Agadir, Morocco.}}
\email{\texttt{E-mail: mohammed.berkhouch@edu.uiz.ac.ma}}
\and
\name{Ghizlane Lakhnati}
\address{\textit{LISAD, ENSA, Ibn Zohr University, Agadir, Morocco.}}
\email{\texttt{E-mail: g.lakhnati@uiz.ac.ma}}
\and
\name{Marcelo Brutti Righi}
\address{\textit{Federal University of Rio Grande do Sul, Porto Alegre, Brazil.}}
\email{\texttt{E-mail: marcelo.righi@ufrgs.br}}}

\abstract{ Risk assessment under different possible scenarios is a source of uncertainty that may lead to concerning financial losses. We address this issue, first, by adapting a robust framework to the class of spectral risk measures. Second, we propose a Deviation-based approach to quantify uncertainty. Furthermore, the theory is illustrated with a practical application.
\begin{flushleft}
JEL classification: C6, G10
\end{flushleft}
}

\date{\today}

\keywords{spectral risk measures, uncertainty, scenarios, robust risk measures, deviation measures, uncertainty measures.}

\maketitle

\section{Introduction}
\label{sec1}

In risk management, the ultimate goal is to calculate required capital to act as a buffer against the inherent risk of a financial position. Over the last few decades, numerous risk measures, which are mappings from a set of random variables to real numbers, have been introduced. Typical examples are Value-at-Risk, Expected Shortfall and various coherent and convex risk measures introduced, respectively, by \cite{artzner1999coherent} and  \cite{follmer2002convex} as axiomatic approach of the reasonable theoretical properties that a measure of risk may fulfill (see Definition \ref{a}). For a comprehensive review, we recommend \cite{pflug2007modeling}, \cite{delbaen2012monetary}, \cite{alexander2005quantitative} and  \cite{föllmer2016stochastic}.

Among coherent risk measures the only law-invariant and co-monotonic additive ones are spectral risk measures, introduced in \cite {acerbi2002spectral} and arguably considered as the most important extensions of Expected Shortfall. Beyond fulfilling most of the suitable theoretical properties of a reasonable risk measure, a \textit{spectral risk measure} is characterized with a weighting function $\phi$ that account for the psychological attitude of different profiles, by reflecting risk-aversion. This argumentum makes this class of risk measures of interest and justifies the framework of the present paper.

 In practice, risk measures have to be estimated from data. Therefore, it is totally legitimate to argue that the most suitable risk measures, for use in practice, are the law-invariant (\textit{distribution-based}) ones. Under different \textit{scenarios} we get different assessments of risk. Moreover, improper distribution assumptions (\textit{probability measures}) can largely affect the risk value, lead to wrong decision and then to significant financial losses. Therefore, the choice of probability measures is a source of \textit{uncertainty} in the risk measurement process. This situation naturally requires considering \textit{robust} risk measures; thus, risk measures that are insensitive to the choice of probability measures.

That is to say, in risk measurement practitioners use risk measures based on a given probability measure on $(\Omega, \mathcal{F})$. Therefore, for each probability we generally get a different measure of risk. These probability measures could be understood as alternative \textit{scenarios}, that we can indifferently interpret as different (models, values of an estimated parameter, economic situations, beliefs...). However, this gives rise to the question: Is there a decent probability measure? Typically, we have a set of candidate probability measures and often we do not know how to pick the appropriate one due to the uncertain character of the financial market.

This \textit{uncertainty} regarding the choice of the proper probability measure (then the proper risk measure) has motivated the investigation of two issues:
\begin{description}
  \item[$\centerdot$] How to overcome uncertainty?
  \item[$\centerdot$] How to measure uncertainty?
\end{description}

In this paper, each probability measure is associated with a scenario. We intend to address the above formulated questions by adapting a consistent framework to the class of spectral risk measures. In this sense, our paper contributes to the existent literature recently addressing this issue, see \cite{wang2018scenario}, \cite{jokhadze2018measuring}, \cite{Righi2018theory} and the references therein.  Moreover, we intend to propose an alternative approach for measuring uncertainty itself. Since quantifying uncertainty provides us with information about how far our risk measurement process could be impacted by uncertainty; and it may even be seen as a penalization to add to the required capital for covering the held financial position. To measure uncertainty \cite{jokhadze2018measuring} proposed a superposed measure that evaluates the dispersion, of a collection of risk measure, relatively to some reference risk measure $\rho_{0}$. At this point, the approach suggested by the authors is interesting but it stays dependent on the choice of a risk measure of reference, and since we are dealing with uncertainty this approach does not completely address the issue. Moreover, the monetary risk measures used in \cite{jokhadze2018measuring}, such as $VaR$ and $ES$, do not capture the variability concept.

The rest of the paper is structured as follows: Section 2 exposes preliminaries regarding notations, theoretical review of risk and deviation measures and examples as well. In section 3, we study two approaches of a robust framework for spectral risk measures. To measure uncertainty an alternative approach is suggested in section 4. Section 5 illustrates our contribution in a case study.

\section{Preliminaries}
\label{sec2}

Consider a measurable space $(\Omega, \mathcal{F})$ and let $\mathcal{P}$ the set of all probability measures on $(\Omega, \mathcal{F})$. For a probability measure $\mathrm{P}\in \mathcal{P}$, $L^{\infty}(\Omega, \mathcal{F}, \mathrm{P})$ is the space of equivalent classes  of essentially bounded random variables. We denote $X$ the random future outcome of a financial position. Constant random variables are identified with real numbers. We use $\mathbb{E}^{\mathrm{P}}[X]= \int_{\Omega}X d\mathrm{P}$, $F_{X,\mathrm{P}}(x)= \mathrm{P}(X\leq x)$ and $F^{-1}_{X,\mathrm{P}}(\alpha)= \inf \{x: F_{X,\mathrm{P}}(x)\geq \alpha\}$ to denote, respectively, the expected value, the probability function (c.d.f.) and its generalized inverse for $X$ under $\mathrm{P}\in \mathcal{P}$. We say that a pair of random variables $X,Y \in L^{\infty}$ is co-monotone if $(X(\omega)-X(\omega'))(Y(\omega)-Y(\omega'))\geq 0 \text{ for all } (\omega,\omega')\in \Omega\times\Omega$.

We expose definitions and theoretical properties of both risk and deviation measures.

\begin{dfn}\label{a}
A risk measure is a functional $\rho: L^{\infty}\rightarrow\mathds{R}$, which may satisfy the following properties:
\begin{description}
  \item[$\bullet$] \textit{Law Invariance:} If $X\in L^{\infty}$ and $Y\in L^{\infty}$ have the same distribution under $\mathrm{P}$, succinctly $X\stackrel{d}{=_{\mathrm{P}}}Y$, then $\rho(X)=\rho(Y)$.
  \item[$\bullet$] \textit{Monotonicity:} $\rho(X)\leq \rho(Y)$ when $X,Y \in L^{\infty}$ are such that $X\geq Y$.
  \item[$\bullet$] \textit{Translation Invariance:} $\rho (X+C)=\rho(X)-C$ for all $C\in \mathds{R}$ and $X\in L^{\infty}$.
  \item[$\bullet$] \textit{Sub-Additivity:} $\rho(X+Y)\leq\rho(X)+\rho(Y)$ for every pair $X,Y\in L^{\infty}$.
  \item[$\bullet$] \textit{Positive Homogeneity:} $\rho (\lambda X)=\lambda \rho(X)$ for all $\lambda \geq 0$ and $X\in L^{\infty}$.
  \item[$\bullet$] \textit{Convexity:} $\rho(\lambda X+ (1-\lambda)Y)\leq \lambda\rho(X)+(1-\lambda)\rho(Y)$ for all $X,Y\in L^{\infty}$ and $\lambda \in [0,1]$.
  \item[$\bullet$] \textit{Co-monotonic Additivity:} $\rho(X+Y)=\rho(X)+\rho(Y)$ for every co-monotonic pair $X,Y\in L^{\infty}$.
  \item[$\bullet$] \textit{Fatou Continuity:} if $\displaystyle \lim_{n\rightarrow \infty} X_{n}=X$ and $\{X_{n}\}_{n=1}^{\infty}, X\in L^{\infty}$, then $\rho(X)\leq \displaystyle\liminf_{n\rightarrow \infty}\rho(X_{n})$.
\end{description}

A risk measure is called \textit{distribution-based} or \textit{law-invariant} if it satisfies Law Invariance, \textit{monetary} if it fulfills Monotonicity and Translation Invariance, \textit{convex} if it is monetary and respects Convexity, \textit{coherent} if it is convex and possesses Positive Homogeneity, \textit{co-monotone} if it attends Co-monotonic Additivity, and \textit{Fatou Continuous} if it satisfies Fatou Continuity.
\end{dfn}

\begin{rmk}
Every pair from Sub-Additivity, Positive Homogeneity and Convexity properties implies the third one. For financial interpretations of the properties above, we refer the reader to (\cite{follmer2002convex}, Chap 4), \cite{delbaen2012monetary} and \cite{alexander2005quantitative}.
\end{rmk}

\begin{ex}\label{b}
The functionals provided below, excepting $WC$, are examples of distribution-based risk measures:
\begin{description}
  \item[$\bullet$] \textit{Worst-case (WC):} This is an extreme robust risk measure, since it does not depend on the probability $\mathrm{P}$, defined as:
      \begin{equation}\label{1}
        WC(X)= -\inf X, \,\forall X\in L^{\infty}.
      \end{equation}
  \item[$\bullet$] \textit{Expected Loss (EL):} This is a parsimonious law-invariant co-monotone and coherent risk measure defined as:
   \begin{equation}\label{2}
    EL^{\mathrm{P}}(X)= -\mathbb{E}^{\mathrm{P}}[X]= -\int_{0}^{1}F_{X,\mathrm{P}}^{-1}(\gamma)d\gamma, \,\forall X\in L^{\infty}.
      \end{equation}

  \item[$\bullet$] \textit{Value-at-Risk (VaR):} This is a leading law-invariant monetary risk measure in both financial theory and practice, defined conform:
      \begin{equation}\label{3}
   VaR_{\alpha}^{\mathrm{P}}(X)= -\inf \{x: F_{X,\mathrm{P}}(x)\geq \alpha\}, \alpha \in [0,1], \,\forall X\in L^{\infty}.
      \end{equation}

  \item[$\bullet$] \textit{Entropic risk measure (Entr):} This is a law-invariant convex risk measure characterized with an aversion parameter $\tau >0$, defined as:
       \begin{equation}\label{4}
    Entr_{\tau}^{\mathrm{P}}(X)= \frac{1}{\tau}\log\mathbb{E}^{\mathrm{P}}[e^{-\tau X}]= \displaystyle \sup_{\mathrm{Q}}\left\{\mathbb{E}^{\mathrm{Q}}[-X]-\frac{1}{\tau}H(\mathrm{Q}/\mathrm{P})\right\}, \,\forall X\in L^{\infty}.
      \end{equation}

     where, $H$ denotes the relative entropy of ${\mathrm{Q}}$ with respect to ${\mathrm{P}}$, see \cite{follmer2002convex}.
  \item[$\bullet$] \textit{Expected Shortfall (ES):} This is a prominent law-invariant co-monotone and coherent risk measure defined as:
  \begin{equation}\label{5}
  ES^{\mathrm{P}}(X)= \frac{1}{1-\alpha}\int_{\alpha}^{1}VaR_{\gamma}^{\mathrm{P}}(X)d\gamma, \alpha \in [0,1),\,\forall X\in L^{\infty}.
      \end{equation}

  \item[$\bullet$] \textit{Spectral risk measure ($\rho_{\phi}$):} This is the class of law-invariant co-monotone and coherent risk measures, proposed by \cite{acerbi2002spectral}, expressed as:
      \begin{equation}\label{6}
   \rho_{\phi}^{\mathrm{P}}(X)=\int_{0}^{1}VaR_{\gamma}^{\mathrm{P}}(X)\,\phi(\gamma)d\gamma,\,\forall X\in L^{\infty}.
      \end{equation}

      where, $\phi$ is a non-increasing, non-negative, right continuous and integrable weighting function such that $\int_{0}^{1}\phi(\gamma)d\gamma=1$.
\end{description}
\end{ex}

\begin{dfn}\label{c}
A deviation measure is a functional $D: L^{\infty}\rightarrow\mathds{R_{+}}$, \cite{rockafellar2006generalized}, which may fulfills the following properties:

\begin{description}
  \item[$\bullet$] \textit{Non-Negativity:} For all $X\in L^{\infty}$, $D(X)=0$ for constant $X$ and $D(X)>0$ for non-constant $X$.
  \item[$\bullet$] \textit{Translation Insensitivity:} $ D(X+C)= D(X)$ for all $ C\in \mathds{R}$ and $X\in L^{\infty}$.
  \item[$\bullet$] \textit{Convexity:} $D(\lambda X+(1-\lambda)Y)\leq \lambda D(X)+(1-\lambda)D(Y)$ for all $X,Y\in L^{\infty}$.
  \item[$\bullet$] \textit{Positive Homogeneity:} $ D(\lambda X)=\lambda D(X)$ for all $\lambda \geq 0$ and $X\in L^{\infty}$.
  \item[$\bullet$] \textit{Co-monotonic additivity:} $D(X+Y)=D(X)+D(Y)$ for every co-monotonic pair $X,Y\in L^{\infty}$.
\end{description}

A measure of deviation $D$ is \textit{proper} if it satisfies Non-Negativity and Translation Insensitivity, \textit{convex} if it is proper and fulfills Convexity, \textit{coherent} if it is convex and respects Positive Homogeneity, and \textit{co-monotone} if it attends Co-monotonic additivity. See e.g., \cite{righi2018composition}, \cite{furman2017gini} and \cite{berkhouch2018extended}.
\end{dfn}

\begin{ex} We provide below some examples that illustrate the deviation concept:
\begin{description}
  \item[$\bullet$] \textit{Full Range (FR):} This extremely conservative deviation measure, that represents the larger possible difference between two values of $X(\omega)$ for distinct $\omega\in \Omega$, is defined as:
      \begin{equation}\label{7}
        FR(X)= \sup X -\inf X, \,\forall X\in L^{\infty}.
      \end{equation}

  \item[$\bullet$] \textit{Lower and Upper Range (LR/UR):} They are adaptations of the full range measure to account for the range below or above the expectation, respectively. They are formulated as:
      \begin{equation}\label{8}
       UR^{\mathrm{P}}(X)= \mathbb{E}^{\mathrm{P}}[X] -\inf X, \,\forall X\in L^{\infty};
       \end{equation}
       \begin{equation}\label{9}
       LR^{\mathrm{P}}(X)= \sup X - \mathbb{E}^{\mathrm{P}}[X], \,\forall X\in L^{\infty}.
      \end{equation}

  \item[$\bullet$] \textit{Variance (Var):} This is the most known deviation measure, being defined as:
   \begin{equation}\label{10}
    Var^{\mathrm{P}}(X)= \mathbb{E}^{\mathrm{P}}[(X-\mathbb{E}^{\mathrm{P}}[X])^{2}], \,\forall X\in L^{\infty}.
      \end{equation}

      It represents the second moment around expectation and has been considered as a proxy for risk in modern finance since the pioneering work of \cite{markowitz1952portfolio}, \cite{markowitz1970portfolio}.
  \item[$\bullet$] \textit{Standard Deviation (SD):} This variability measure is expressed as the root of the variance:
   \begin{equation}\label{11}
     SD^{\mathrm{P}}(X)=\left( \mathbb{E}^{\mathrm{P}}[(X-\mathbb{E}^{\mathrm{P}}[X])^{2}]\right)^{\frac{1}{2}}, \,\forall X\in L^{\infty}.
      \end{equation}

  \item[$\bullet$] \textit{Semi-Deviation ($SD_{-}/SD^{+}$):} The lower and upper semi-deviations are adaptations of the standard deviation that consider dispersion only from values, respectively, below or above the expectation in order to avoid symmetry. They are defined conform:
      \begin{equation}\label{12}
   SD_{\mp}^{\mathrm{P}}(X)=\left( \mathbb{E}^{\mathrm{P}}[((X-\mathbb{E}^{\mathrm{P}}[X])^{2})^{\mp}]\right)^{\frac{1}{2}}, \,\forall X\in L^{\infty}.
      \end{equation}

  \item[$\bullet$] \textit{Mean Gini coefficient (Gini):} This is a statistical coefficient that measures the variation degree in the set of values $X(\omega)$ as $\omega$ varies in $\Omega$, see \cite{shalit1984mean}, \cite{giorgi1993fresh}, \cite{giorgi2005bibliographic}, \cite{yitzhaki1998more}, \cite{ceriani2012origins}, \cite{furman2017gini} and \cite{berkhouch2018extended}. It is defined as:
      \begin{equation}\label{13}
    Gini^{\mathrm{P}}(X)= \mathbb{E}^{\mathrm{P}}[|X^{\ast}-X^{\ast\ast}|], \,\forall X\in L^{\infty};
      \end{equation}

      where, $X^{\ast}$ and $X^{\ast\ast}$ are two independent copies of $X$.
      Or, in terms of covariance ($Cov$):\[Gini^{\mathrm{P}}(X)= 4\,Cov^{\mathrm{P}}[X, F_{X,\mathrm{P}}(X)], \,\forall X\in L^{\infty}.\]
  \item[$\bullet$] \textit{Extended Gini coefficient (EGini):} This is a well-known coefficient in finance and economics characterized with a parameter of aversion degree $r\geq 1$, see \cite{yitzhaki1983extension}, \cite{yitzhaki2005properties}, \cite{yitzhaki2012gini}, \cite{mao2018risk} and \cite{berkhouch2018extended}. It is formulated conform:
      \begin{equation}\label{14}
       EGini^{\mathrm{P}}(X)= -2r\,Cov^{\mathrm{P}}[X,(1- F_{X,\mathrm{P}}(X))^{r-1}], \,\forall X\in L^{\infty}.
      \end{equation}

      For $r=2$, we get the Mean Gini coefficient.
\end{description}
\end{ex}

\section{Robust framework for spectral risk measures}
\label{sec3.3}

For a measurable space of scenarios $(S,\mathcal{S})$ and a weighting function $\phi$ (a priori specified), we consider a collection $\rho_{\phi}=(\rho_{\phi}^{s})_{s\in S}$ of spectral risk measures such that: $\rho_{\phi}^{s}: L^{\infty}(\Omega,\mathcal{F})\rightarrow \mathds{R}$, for every $s\in S$. We assume $\rho_{\phi,X}^{.}:= \rho_{\phi}^{.}(X): (S, \mathcal{S})\rightarrow \mathds{R}$ $\mathcal{S}$-measurable, for every financial position $X \in L^{\infty}$.

Throughout this paper, we adopt the following notations according to the context:
\[\rho_{\phi}(X):=(\rho_{\phi}^{s}(X))_{s\in S}=(\rho_{\phi,X}^{s})_{s\in S}:=\rho_{\phi,X}.\]

\begin{dfn}\label{d}
We call \textit{scenario-based spectral risk measurement} (abri. scenario-based SRM) every family of spectral risk measures with a weighting function $\phi$ such that:
\[\rho_{\phi}=\{\rho_{\phi}^{s}: L^{\infty}(\Omega, \mathcal{F})\rightarrow \mathds{R}, s\in S\}.\]
\end{dfn}

\begin{dfn}\label{e}
For a held financial position $X$, we define an \textit{uncertainty-free set} as:
\[\{\rho_{\phi,X} \in L^{\infty}(S,\mathcal{S}) \,s.t.: \rho_{\phi,X}^{s}=\rho_{\phi,X}^{s^{\prime}}, \forall s,s^{\prime}\in S\}.\]
\end{dfn}

We now define \textit{S-based risk measures} as introduced in \cite{wang2018scenario}:

\begin{dfn}\label{f}
For a scenario set $S$, a measure of risk $\rho$ is called S-based if: $\rho(X)=\rho(Y)$ for $X,Y\in L^{\infty}$ whenever $X\stackrel{d}{=_s}Y$.
\end{dfn}

Consider a scenario-based SRM, $\rho_{\phi}$, and let $\mu$ be a probability measure on $(S,\mathcal{S})$. We intend to capture the uncertain character of $\rho_{\phi}$, by composing $\rho_{\phi}$ with a risk measure on $L^{\infty}(S,\mathcal{S})$; the idea is, instead of picking a specific risk measure, we consider the whole set of candidates. Let $R$ be a risk measure on $L^{\infty}(S,\mathcal{S})$:

\begin{dfn}\label{g}
The \textit{composition} of $R$ and $\rho_{\phi}$ is a monetary risk measure defined as:
\begin{equation}\label{15}
  R\,o\,\rho_{\phi}(X)=R(-\rho_{\phi}(X)),\, \forall X\in L^{\infty}.
\end{equation}
\end{dfn}

By definition, the composition $R\,o\,\rho_{\phi}$ is $S$-based. This approach allows us to define new risk measures that account for uncertainty over our scenario-based SRM $\rho_{\phi}$. We list some relevant examples below:

\begin{ex}
\begin{description}
  \item[\,]
  \item[$\bullet$] \textit{Worst-case scenario ($\rho_{\phi}^{WC}$):} For $R=WC$, we get the worst-case scenario defined as:
  \begin{equation}\label{16}
    \rho_{\phi}^{WC}(X)= \displaystyle\sup_{s\in S}\rho_{\phi}^{s}(X),\,\forall X\in L^{\infty}.
  \end{equation}

  \item[$\bullet$] \textit{Scenario-based Expectation ($E_{\phi}$):} For $R=EL^{\mu}$, we introduce the scenario-based Expectation:
  \begin{equation}\label{17}
    E_{\phi}^{\mu}(X)= \int_{S}\rho_{\phi}^{s}(X)d\mu,\,\forall X\in L^{\infty}.
  \end{equation}

  \item[$\bullet$] \textit{Scenario-based Value-at-Risk ($VaR_{\alpha,\phi}$):} For $R=VaR_{\alpha}^{\mu}$, $\alpha\in [0,1]$, we define the scenario-based Value-at-Risk as:
      \begin{equation}\label{18}
      VaR_{\alpha,\phi}^{\mu}(X)=\inf \{x\in \mathds{R}:\mu(\rho_{\phi}(X)< x)\leq 1-\alpha\},\,\forall X\in L^{\infty}.
     \end{equation}

  \item[$\bullet$] \textit{Scenario-based entropic risk measure ($Entr_{\tau,\phi}$):} For $R=Entr_{\tau}^{\mu}$, $\tau >0$, we define the scenario-based entropic risk measure conform:
      \begin{equation}\label{19}
       Entr_{\tau,\phi}^{\mu}(X)= \displaystyle\sup_{\mathrm{Q}}\left\{E_{\phi}^{\mathrm{Q}}(X)- \frac{1}{\tau}H(\mathrm{Q}/\mu)\right\},\,\forall X\in L^{\infty}.
     \end{equation}

  \item[$\bullet$] \textit{Scenario-based Expected Shortfall ($ES_{\alpha,\phi}$):} For $R=ES_{\alpha}^{\mu}$, $\alpha\in [0,1)$, we introduce the scenario-based Expected Shortfall as:
      \begin{equation}\label{20}
       ES_{\alpha,\phi}^{\mu}(X)=\frac{1}{1-\alpha}\int_{\alpha}^{1}VaR_{\gamma,\phi}^{\mu}(X)d\gamma, \,\forall X\in L^{\infty}.
     \end{equation}

\end{description}
\end{ex}

\begin{rmk}
The above introduced risk measures are examples of interesting robust compositions for our scenario-based SRM $\rho_{\phi}$; nevertheless, $\rho_{\phi}$ could be superposed with any conceivable risk measure $R$ on $L^{\infty}(S,\mathcal{S})$. A discrete version of $E_{\phi}$, in (\ref{17}), may be introduced as \textit{scenario-based Weighted Average} ($WA_{\phi}$) across a set of $n$ scenarios $\{s_{1},...,s_{n}\}$:
\begin{equation}\label{21}
  WA_{\phi}^{\mu}(X)=\displaystyle\sum_{i=1}^{n} \mu(s_{i})\, \rho_{\phi}^{s_{i}}(X), \,\forall X\in L^{\infty}.
\end{equation}
This formulation could be more relevant for practical issues. Moreover, $\rho_{\phi}^{WC}$ and  $ES_{\alpha,\phi}$ are $S$-based and coherent; $Entr_{\tau,\phi}$ is $S$-based and convex; $WA_{\phi}$, $E_{\phi}$ and are $S$-based co-monotone and coherent.
\end{rmk}

\begin{rmk}
According to the properties of $\rho_{\phi}$ and $R$, and from Corollary 1 \cite{Righi2018theory} (for $f(\cdot)= R(- \cdot)$), we provide results regarding dual representation of the superposed robust risk measure $R\,o\,\rho_{\phi}$. Let $\rho_{\phi}$ be a scenario-based SRM and $R$ be a monetary risk measure. Then:
\begin{description}
  \item[$i)$] If $R$ is a law-invariant convex risk measure, then the representation is conform:
  \begin{equation}\label{41}
    R\,o\,\rho_{\phi}(X)= \displaystyle\sup_{\mu\in\mathcal{V}}\left\{\int_{[0,1)}ES_{\gamma}(X) dm^{\mu}- \delta_{R}(\mu)\right\},\, \forall X\in L^{\infty}.
  \end{equation}
  where, $m^{\mu}\in cl(\mathcal{M}_{E_{\phi}^{\mu}})$. For $\mathcal{V}$ the set of probability measures in $(S,\mathcal{S})$, and $\delta_{R}$ a penalty term; see \cite{Righi2018theory}.

  \item[$ii)$] If $R$ is a law-invariant coherent risk measure, then the representation becomes:
  \begin{equation}\label{41}
    R\,o\,\rho_{\phi}(X)= \displaystyle\sup_{\mu\in\mathcal{V_{R}}} \int_{[0,1)}ES_{\gamma}(X) dm^{\mu},\, \forall X\in L^{\infty}.
  \end{equation}
  where, $m^{\mu}\in cl(\mathcal{M}_{E_{\phi}^{\mu}})$ and $\mathcal{V_{R}}= \{\mu\in \mathcal{V}: R\,o\,\rho_{\phi}(X)\geq E_{\phi}^{\mu}(X), \,\forall X\in L^{\infty}\}$; see \cite{Righi2018theory}.

  \item[$iii)$] If $R$ is a law-invariant co-monotone coherent risk measure, then the representation becomes:
  \begin{equation}\label{41}
    R\,o\,\rho_{\phi}(X)= \int_{[0,1)}ES_{\gamma}(X) dm^{\mu},\, \forall X\in L^{\infty}.
  \end{equation}
  for some $m^{\mu}\in cl(\mathcal{M}_{E_{\phi}^{\mu}})$.

\end{description}
\end{rmk}

From an other perspective, Value-at-Risk ($VaR$) and Expected Shortfall ($ES$) are widely used in banking and insurance industries; therefore, we consider them as a building block for a second approach. First, by definition a spectral risk measure is expressed as weighted Value-at-Risk:
\begin{equation}\label{22}
  \rho_{\phi}^{s}(X)=\int_{0}^{1}VaR_{\gamma}^{s}(X)\,\phi(\gamma)d\gamma,\,\forall X\in L^{\infty}.
\end{equation}

Second, as a classic result from dual representation of law-invariant co-monotone coherent risk measures (see \cite{kusuoka2001law}, \cite{acerbi2002spectral} and  \cite{frittelli2005law}), a spectral risk measure may also be expressed as a mixture of Expected Shortfall (ES):
\begin{equation}\label{23}
\rho_{\phi}^{s}(X)=\int_{[0,1)} ES_{\gamma}^{s}(X)dm,\,\forall X\in L^{\infty}.
\end{equation}
for some probability measure $m$ on $[0,1)$.

Therefore, addressing the uncertainty issue over a scenario-based SRM $\rho_{\phi}=(\rho_{\phi}^{s})_{s\in S}$ can be turned to dealing with uncertainty over the collections $VaR_{\alpha}= (VaR_{\alpha}^{s})_{s\in S}$ and $ES_{\alpha}= (ES_{\alpha}^{s})_{s\in S}$. We then formulate the corresponding robust composition versions of $VaR_{\alpha}$ and $ES_{\alpha}$, for $R:L^{\infty}(S,\mathcal{S})\rightarrow \mathds{R}$:
\begin{equation}\label{24}
R\,o\,VaR_{\alpha}(X)= R(-VaR_{\alpha}(X)),\,\forall X\in L^{\infty}.
\end{equation}

\begin{equation}\label{25}
R\,o\,ES_{\alpha}(X)= R(-ES_{\alpha}(X)),\,\forall X\in L^{\infty}.
\end{equation}

Similarly, one may define an alternative version of $ES_{\alpha}$:
\begin{equation}\label{26}
ES_{\alpha,R}(X)= \frac{1}{1-\alpha}\int_{\alpha}^{1}R\,o\,VaR_{\gamma}(X)d\gamma, \,\forall X\in L^{\infty}.
\end{equation}

Based on this, we introduce below some new variants of robust risk measures that capture uncertainty over a scenario-based SRM $\rho_{\phi}$. Note that despite using the same notation, the below introduced risk measures are not necessarily equal.

\begin{dfn}
For a scenario-based SRM $\rho_{\phi}$ and a risk measure $R$ on $L^{\infty}(S,\mathcal{S})$, we introduce:
\begin{description}
  \item[$i)$]
  \begin{equation}\label{27}
   \rho_{\phi,R}(X)= \int_{0}^{1}R\,o\,VaR_{\gamma}(X)\, \phi(\gamma)d\gamma, \,\forall X\in L^{\infty}.
  \end{equation}
  \item[$ii)$]
  \begin{equation}\label{28}
   \rho_{\phi,R}(X)= \int_{[0,1)}R\,o\,ES_{\gamma}(X)dm, \,\forall X\in L^{\infty}.
  \end{equation}
  \item[$iii)$]
  \begin{equation}\label{29}
   \rho_{\phi,R}(X)= \int_{[0,1)}ES_{\gamma,R}(X)dm, \,\forall X\in L^{\infty}.
  \end{equation}
\end{description}
\end{dfn}

\begin{prp}\label{pr2}
Let $R$ be a risk measure on $L^{\infty}(S,\mathcal{S})$ and consider $\rho_{\phi,R}$ defined in (\ref{28}).
\item[$i)$] If $R$ is convex then $\rho_{\phi,R}$ is $S$-based and convex.
\item[$ii)$] If $R$ is coherent then $\rho_{\phi,R}$ is $S$-based and coherent.
\item[$iii)$] If $R$ is co-monotone then $\rho_{\phi,R}$ is $S$-based and co-monotone.
\end{prp}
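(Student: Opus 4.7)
The plan is to inherit properties sequentially: first from each inner map $ES_\gamma^{s}$ (a law-invariant co-monotone coherent risk measure on $L^{\infty}(\Omega,\mathcal{F})$ for every fixed $s$), then through $R$ acting on $L^{\infty}(S,\mathcal{S})$, and finally through the probability mixture $m$ over $[0,1)$. All three items share the same skeleton, so I would prove them in parallel.

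\textbf{$S$-based property and monetary backbone.} If $X \stackrel{d}{=_s} Y$ for every $s\in S$, then law-invariance of each $ES_\gamma^s$ gives $ES_\gamma^s(X)=ES_\gamma^s(Y)$ for all $\gamma,s$, so $-ES_\gamma(X)=-ES_\gamma(Y)$ as elements of $L^\infty(S,\mathcal{S})$ and $R\,o\,ES_\gamma(X)=R\,o\,ES_\gamma(Y)$; integration against $m$ preserves this equality. Translation invariance of $\rho_{\phi,R}$ follows from $ES_\gamma^s(X+C)=ES_\gamma^s(X)-C$ combined with translation invariance of $R$ and the normalization $\int_{[0,1)}dm=1$. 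Monotonicity comes from the pointwise-in-$s$ chain $X\geq Y\Rightarrow ES_\gamma^s(X)\leq ES_\gamma^s(Y)\Rightarrow -ES_\gamma(X)\geq -ES_\gamma(Y)$ in $L^\infty(S,\mathcal{S})$, followed by monotonicity of $R$ and preservation of the inequality under integration.

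\textbf{Cases (i) and (ii).} For convexity, the coherence of each $ES_\gamma^s$ gives the pointwise inequality $-ES_\gamma(\lambda X+(1-\lambda)Y)\geq \lambda(-ES_\gamma(X))+(1-\lambda)(-ES_\gamma(Y))$ in $L^\infty(S,\mathcal{S})$. Applying first monotonicity and then convexity of $R$ yields
\[
R\,o\,ES_\gamma(\lambda X+(1-\lambda)Y)\leq \lambda R\,o\,ES_\gamma(X)+(1-\lambda)R\,o\,ES_\gamma(Y),
\]
and integration against $m$ produces convexity of $\rho_{\phi,R}$, finishing (i). For (ii), positive homogeneity of $ES_\gamma^s$ gives $-ES_\gamma(\lambda X)=\lambda(-ES_\gamma(X))$ for $\lambda\geq 0$, and positive homogeneity of $R$ propagates through the composition and the integral to yield $\rho_{\phi,R}(\lambda X)=\lambda\rho_{\phi,R}(X)$.

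\textbf{Case (iii).} Co-monotonic additivity of each $ES_\gamma^s$ gives, for a co-monotone pair $X,Y$ on $\Omega$, the identity $-ES_\gamma(X+Y)=-ES_\gamma(X)+(-ES_\gamma(Y))$ in $L^\infty(S,\mathcal{S})$. In order to invoke the co-monotonic additivity of $R$ and conclude $R\,o\,ES_\gamma(X+Y)=R\,o\,ES_\gamma(X)+R\,o\,ES_\gamma(Y)$, one needs the two maps $s\mapsto -ES_\gamma^s(X)$ and $s\mapsto -ES_\gamma^s(Y)$ to form a co-monotone pair on $(S,\mathcal{S})$. This transfer of co-monotonicity from $\Omega$ to $S$ is the step I expect to be the main obstacle, since it is not an automatic consequence of $X,Y$ being co-monotone on $\Omega$ alone. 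My plan is to check whether an implicit assumption on the scenario set is in force here (by consulting the surrounding text and \cite{Righi2018theory}), and otherwise to exploit the fact that $ES_\gamma^s$ is non-decreasing in its argument, together with a joint ordering argument on $S$ induced by $X$ and $Y$, to obtain the required co-monotonicity pointwise in $\gamma$. Once this is secured, integration against $m$ delivers co-monotonic additivity of $\rho_{\phi,R}$, completing (iii).
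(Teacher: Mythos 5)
Your handling of the $S$-based property and of parts $i)$ and $ii)$ is essentially the paper's own argument: the paper likewise asserts that $R\,o\,ES_{\gamma}$ inherits convexity (resp.\ coherence) from $R$ and $ES_{\gamma}$ and then passes the resulting inequalities and identities through the integral against $m$. Your version is slightly more careful in that it makes explicit the use of monotonicity of $R$ in the convexity step (one needs $R$ monotone to go from the pointwise inequality $-ES_{\gamma}(\lambda X+(1-\lambda)Y)\geq \lambda(-ES_{\gamma}(X))+(1-\lambda)(-ES_{\gamma}(Y))$ on $S$ to an inequality after applying $R$), which the paper leaves implicit; there is no substantive difference.

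For part $iii)$ you have put your finger on a genuine gap, and it is a gap in the paper's proof as well: the paper simply writes ``$R\,o\,ES_{\gamma}$ is co-monotone'' with no justification, which is exactly the transfer step you flag. Your worry is substantiated, and your proposed fallback (deriving a joint ordering on $S$ from $X$ and $Y$ alone) cannot work in general. Take $\Omega=\{\omega_1,\omega_2,\omega_3\}$, $X=(0,1,1)$ and $Y=(0,0,2)$ (a co-monotone pair on $\Omega$), $\gamma=0$ so that $ES_0^{s}=-\mathbb{E}^{s}$, and two scenarios $\mathrm{P}_{s}=\delta_{\omega_2}$ and $\mathrm{P}_{s'}=\tfrac12\delta_{\omega_1}+\tfrac12\delta_{\omega_3}$. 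Then $s\mapsto -ES_0^{s}(X)=\mathbb{E}^{s}[X]$ takes the values $(1,\tfrac12)$ on $(s,s')$ while $s\mapsto \mathbb{E}^{s}[Y]$ takes the values $(0,1)$, so these two maps are strictly anti-monotone on $S$. Choosing $R=WC$ on $L^{\infty}(S,\mathcal{S})$, which is coherent and co-monotonically additive, gives $R\,o\,ES_0(X+Y)=-\inf(1,\tfrac32)=-1$ whereas $R\,o\,ES_0(X)+R\,o\,ES_0(Y)=-\tfrac12+0=-\tfrac12$; taking $m=\delta_0$ (so that $\rho_{\phi}$ is the expected loss, a legitimate spectral risk measure) this failure passes to $\rho_{\phi,R}$. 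Hence part $iii)$ is false as stated and requires an additional hypothesis, e.g.\ that the scenario family is such that $s\mapsto\rho^{s}(X)$ and $s\mapsto\rho^{s}(Y)$ form a co-monotone pair on $(S,\mathcal{S})$ for every co-monotone pair $X,Y$ on $\Omega$, or that $R$ is additive rather than merely co-monotonically additive. You were right not to wave this step through.
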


\begin{proof}$\,$\\
  $i)$ $R$ is convex and $ES_{\gamma}$ is convex.\\
  We have $R\,o\,ES_{\gamma}$ is convex. Let $C \in \mathds{R}$:
  \begin{alignat*}{1}
   \rho_{\phi,R}(X+C)  &=  \int_{[0,1)}R\,o\,ES_{\gamma}(X+C)dm\\
     &= \int_{[0,1)}R\,o\,ES_{\gamma}(X)dm -C \\
     &= \rho_{\phi,R}(X)-C.\\
  \end{alignat*}

Let $X,Y \in L^{\infty}$ s.t.: $X\leq Y$, we have:
\begin{alignat*}{1}
   R\,o\,ES_{\gamma}(X) \geq R\,o\,ES_{\gamma}(Y)  &\Rightarrow  \int_{[0,1)}R\,o\,ES_{\gamma}(X)dm \geq \int_{[0,1)}R\,o\,ES_{\gamma}(Y)dm\\
     &\Rightarrow \rho_{\phi,R}(X)\geq \rho_{\phi,R}(Y). \\
  \end{alignat*}
Thus, $\rho_{\phi,R}$ is a monetary risk measure.\\
Let $\lambda \in [0,1]$ and $X,Y \in L^{\infty}$, since $R\,o\,ES_{\gamma}$ is convex we have:
\begin{alignat*}{1}
   & R\,o\,ES_{\gamma}(\lambda X+ (1-\lambda)Y) \leq \lambda\, R\,o\,ES_{\gamma}(X)+(1-\lambda)\, R\,o\,ES_{\gamma}(Y) \\
   &\Rightarrow  \int_{[0,1)}R\,o\,ES_{\gamma}(\lambda X+ (1-\lambda)Y)dm \leq \lambda \int_{[0,1)}R\,o\,ES_{\gamma}( X)dm + (1-\lambda) \int_{[0,1)}R\,o\,ES_{\gamma}(Y)dm \\
     &\Rightarrow \rho_{\phi,R}(\lambda X+ (1-\lambda)\,Y)\leq \lambda\, \rho_{\phi,R}(X)+(1-\lambda)\rho_{\phi,R}(Y). \\
  \end{alignat*}
Therefore, $\rho_{\phi,R}$ is convex.\\

$ii)$ $R$ is coherent.\\
$R\,o\,ES_{\gamma}$ is coherent. Let $\lambda\geq 0$, we get:
\begin{alignat*}{1}
   \rho_{\phi,R}(\lambda X)  &=  \int_{[0,1)}R\,o\,ES_{\gamma}(\lambda X)dm\\
     &= \int_{[0,1)}\lambda\, R\,o\,ES_{\gamma}(X)dm \\
     &= \lambda\, \rho_{\phi,R}(X).\\
  \end{alignat*}
Thus, $\rho_{\phi,R}$ is positive homogeneous. Then, $\rho_{\phi,R}$ is coherent. \\

$iii)$ $R$ is co-monotone.\\
$R\,o\,ES_{\gamma}$ is co-monotone. Let $X,Y$ be a co-monotone pair $\in L^{\infty}$, we have:
\begin{alignat*}{1}
   \rho_{\phi,R}( X+Y)  &=  \int_{[0,1)}R\,o\,ES_{\gamma}( X+Y)dm\\
     &= \int_{[0,1)} R\,o\,ES_{\gamma}(X)dm + \int_{[0,1)} R\,o\,ES_{\gamma}(Y)dm\\
     &= \rho_{\phi,R}(X)+\rho_{\phi,R}(Y).\\
  \end{alignat*}
Therefore, $\rho_{\phi,R}$ is co-monotone.
\end{proof}

\begin{rmk}
Spectral risk measures are a special case of the larger class of distortion risk measures. Therefore, the proposed approaches in this paper could be extended, in the same fashion, to the class of distortion measures of risk via the framework of Choquet integrals under a given capacity (see e.g., \cite{choquet1954theory}, \cite{schmeidler1986integral}, \cite{schmeidler1989subjective}, \cite{yaari1987dual} and \cite{wang2018characterization}).
\end{rmk}

\section{Measuring uncertainty}
As discussed early in the introduction and since measuring uncertainty means quantifying the variability over $\rho_{\phi}(X)=(\rho^{s}_{\phi}(X))_{s\in S}$ as long as $s$ varies across the scenario set $S$, we argue that this quantification should be done within the elements of $\rho_{\phi}(X)$ without reliance on a reference measure, as proposed in \cite{jokhadze2018measuring}. Thus, we claim that measuring uncertainty by a \textit{Deviation-based} approach is more relevant and suits much more the objective.

\begin{dfn}\label{j}
Consider a scenario-based SRM $\rho_{\phi}$, a measure of uncertainty $U$ is a non-negative mapping on $L_{\infty}(S,\mathcal{S})$ that evaluates uncertainty over $(\rho_{\phi}^{s}(X))_{s\in S}$ as $s$ varies in $S$, for a financial position $X$:
\begin{alignat*}{30}
   U: L_{\infty}(S,\mathcal{S})  &\rightarrow \mathds{R}_{+}\\
    \rho_{\phi}(X) &\mapsto U(\rho_{\phi}(X))\\
  \end{alignat*}
\end{dfn}

\begin{rmk}
Throughout this section we opt for the following notation: \[\rho_{\phi}(X):=\rho_{\phi,X}.\] The focus in this paper is especially on spectral risk measures; however, our approach could be extended to any other class of risk measures.
\end{rmk}
In this stated logic, we propose measuring uncertainty by using \textit{deviation measures} on $L^{\infty}(S,\mathcal{S})$: $U=D$. Moreover, the theoretical properties of a deviation measure (see Definition \ref{c}) are consistent and easily interpretable in the context of uncertainty measurement.  For instance, Non-Negativity assures that there is no variability in an \textit{uncertainty-free set} (see Definition \ref{e}); Translation Insensitivity indicates that uncertainty does not change if a certain constant value is added.

 Consider a scenario-based SRM, $\rho_{\phi}$, and a financial position $X$. Let $\mu$ be a probability measure on $(S,\mathcal{S})$ and $F_{\mu}$ be the corresponding distribution function for the random variable $\rho_{\phi,X}$ on $L^{\infty}(S,\mathcal{S})$. We provide some examples of uncertainty measures to illustrate our Deviation-based approach:
\[U(\rho_{\phi,X})= D(\rho_{\phi,X}):=D_{\phi}(X);\]
\begin{description}
  \item[$\bullet$] For $D=FR$, we introduce the \textit{Full Range-type uncertainty measure ($FR_{\phi}$):}
   \begin{equation}\label{31}
     FR_{\phi}(X)=\displaystyle\sup_{s\in S} \rho_{\phi,X}^{s} - \displaystyle\inf_{s\in S}\rho_{\phi,X}^{s}, \,\forall X\in L^{\infty}.
   \end{equation}

  This measure of uncertainty assesses the maximum distance between the candidates $\{\rho_{\phi,X}^{s}, s\in S\}$. Thus, it tends to overestimate uncertainty.

  \item[$\bullet$] For $D=LR/ D=UR$, we get a \textit{Lower/Upper Range-type uncertainty measure ($LR_{\phi}/UR_{\phi}$):}
  \begin{equation}\label{32}
     UR_{\phi}^{\mu}(X)= \mathbb{E}^{\mu}[\rho_{\phi,X}] -\displaystyle\inf_{s\in S}\rho_{\phi,X}^{s}, \,\forall X\in L^{\infty};
   \end{equation}

  \begin{equation}\label{33}
     LR_{\phi}^{\mu}(X)= \displaystyle\sup_{s\in S} \rho_{\phi,X}^{s} - \mathbb{E}^{\mu}[\rho_{\phi,X}], \,\forall X\in L^{\infty}.
   \end{equation}

  \item[$\bullet$] For $D=Var$, we define a \textit{Variance-type uncertainty measure ($Var_{\phi}$):}
  \begin{equation}\label{34}
     Var_{\phi}^{\mu}(X)= \mathbb{E}^{\mu}[(\rho_{\phi,X}-\mathbb{E}^{\mu}[\rho_{\phi,X}])^{2}], \,\forall X\in L^{\infty}.
   \end{equation}

  \item[$\bullet$] For $D=SD$, we define a \textit{Standard Deviation-type uncertainty measure ($SD_{\phi}$):}
  \begin{equation}\label{35}
     SD_{\phi}^{\mu}(X)= \left(\mathbb{E}^{\mu}[(\rho_{\phi,X}-\mathbb{E}^{\mu}[\rho_{\phi,X}])^{2}]\right)^{\frac{1}{2}}, \,\forall X\in L^{\infty}.
   \end{equation}

  \item[$\bullet$] For $D=Gini$, we introduce a \textit{Gini-type uncertainty measure ($Gini_{\phi}$):}
  \begin{equation}\label{36}
     Gini_{\phi}^{\mu}(X)= \mathbb{E}^{\mu}[|\rho_{\phi,X}^{\ast}-\rho_{\phi,X}^{\ast\ast}|], \,\forall X\in L^{\infty};
   \end{equation}

  where, $\rho_{\phi,X}^{\ast}$ and $\rho_{\phi,X}^{\ast\ast}$ are two independent copies of $\rho_{\phi,X}$. Or, in terms of covariance:
  \begin{equation}\label{37}
     Gini_{\phi}^{\mu}(X)= 4\,Cov^{\mu}[\rho_{\phi,X}, F_{\mu}(\rho_{\phi,X})], \,\forall X\in L^{\infty}.
   \end{equation}

  \item[$\bullet$] For $D=EGini$, we get an \textit{Extended Gini-type uncertainty measure ($EGini_{\phi}$):}
  \begin{equation}\label{38}
     EGini_{\phi}^{\mu}(X)= -2r\,Cov^{\mu}[\rho_{\phi,X},(1- F_{\mu}(\rho_{\phi,X}))^{r-1}], \,\forall X\in L^{\infty}.
   \end{equation}

  where, the parameter $r\geq1$ might be seen as a degree of uncertainty-aversion.
  This measure could be used in order to exhibit the attitude towards uncertainty of different profiles.
\end{description}

\section{A case study}

In this section we are going to provide an illustration using a case study. We consider a scenario-based SRM, $\rho_{\phi}$, where the spectral risk measure $\rho_{\phi}^{s}$ is the \textit{Extended Gini Shortfall} introduced in \cite{berkhouch2018extended} conform:
\begin{equation}\label{39}
  EGS_{r,p,\lambda}^{s}(X)= \int_{0}^{1} VaR^{s}_{\gamma}(X)\,\phi_{r,p}^{\lambda}(\gamma) d\gamma, \,\forall X\in L^{\infty}.
\end{equation}
where,\begin{equation}\label{40}
\phi_{r,p}^{\lambda}(\gamma)= \frac{1-p+2\lambda[(1-p)^{r-1}-r(1-\gamma)^{r-1}]}{(1-p)^{2}}\,\mathds{1}_{[p,1]}(\gamma),
\end{equation}
with $\gamma\in [0,1]$, $p\in(0,1)$, $r>1$, and $\lambda\in [0, 1/(2(r-1)(1-p)^{r-2})].$\\

Our data set consists of daily return $X$ of the NASDAQ index, covering the period from $Jun. 01, 2007$ to $Mar. 31, 2019$, with a total of $N=3080$ observations\footnote{\url{https://finance.yahoo.com}}. Figure \ref{Fig:Return Graph} shows the return evolution of the NASDAQ index over the observed period.
\begin{figure}[!]
  \centering
  \includegraphics[width=18cm,height=16cm]{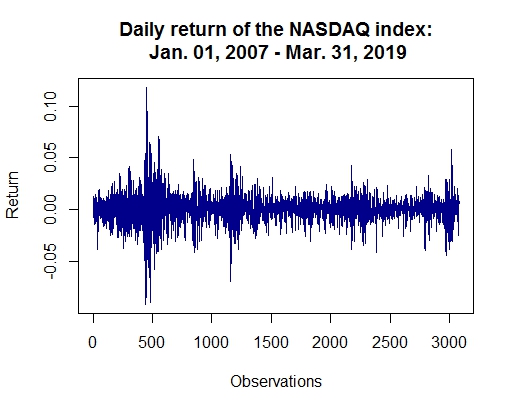}
\caption{Graph of the daily observed NASDAQ return.}\label{Fig:Return Graph}
\end{figure}

We consider sequences of three months from our data set. Thus, each trimester is regarded as a different economic situation (or scenario) that represents a different distribution of data. Therefore, we get a total of $n=49$ scenarios.

For each scenario $s$, we evaluate $EGS^{s}(X)$ for a confidence level $p=95\%$ and a risk-aversion degree $r=2$. To fulfill the coherency condition of $EGS$ we arbitrarily take $\lambda$ the midpoint of $[0, 1/(2(r-1)(1-p)^{r-2})]$, thus $\displaystyle \lambda=\frac{1}{4(r-1)(1-p)^{r-2}}$. Table \ref{Tab: EGS} reports the estimated values of $EGS$ according to each scenario\footnote{a: Jan., Feb., Mar./ b: Apr., May, Jun./ c: Jul., Aug., Sep./ d: Oct., Nov., Dec.}.
\begin{table}[!]
\begin{center}
\begin{tabular}{|c|c|c|}
  \hline
  & Scenarios & $EGS^{s}(X)$  \\\hline\hline
1 & 2007-a & 2.73\%\\
2 & 2007-b & 1.61\%\\
3 & 2007-c & 2.31\%\\
4 & 2007-d & 2.61\%\\
5 & 2008-a & 3.25\%\\
6 & 2008-b & 2.93\%\\
7 & 2008-c & 6.12\%\\
8 & 2008-d & 7.87\%\\
9 & 2009-a & 4.86\%\\
10 & 2009-b & 3.40\%\\
11 & 2009-c & 2.53\%\\
12 & 2009-d & 2.65\%\\
13 & 2010-a & 2.62\%\\
14 & 2010-b & 3.85\%\\
15 & 2010-c & 2.58\%\\
16 & 2010-d & 1.65\%\\
17 & 2011-a & 2.45\%\\
18 & 2011-b & 1.92\%\\
19 & 2011-c & 5.65\%\\
20 & 2011-d & 3.31\%\\
21 & 2012-a & 1.01\%\\
22 & 2012-b & 2.49\%\\
23 & 2012-c & 1.33\%\\
24 & 2012-d & 2.06\%\\
25 & 2013-a & 1.50\%\\
26 & 2013-b & 2.09\%\\
27 & 2013-c & 1.56\%\\
28 & 2013-d & 1.72\%\\
29 & 2014-a & 2.18\%\\
30 & 2014-b & 2.44\%\\
31 & 2014-c & 1.80\%\\
32 & 2014-d & 2.01\%\\
33 & 2015-a & 2.04\%\\
34 & 2015-b & 1.88\%\\
35 & 2015-c & 3.45\%\\
36 & 2015-d & 1.83\%\\
37 & 2016-a & 3.27\%\\
38 & 2016-b & 2.68\%\\
39 & 2016-c & 1.56\%\\
40 & 2016-d & 1.30\%\\
41 & 2017-a & 1.15\%\\
42 & 2017-b & 1.99\%\\
43 & 2017-c & 1.68\%\\
44 & 2017-d & 1.00\%\\
45 & 2018-a & 3.63\%\\
46 & 2018-b & 2.33\%\\
47 & 2018-c & 1.41\%\\
48 & 2018-d & 4.01\%\\
49 & 2019-a & 2.59\%\\
  \hline
\end{tabular}\caption{Estimated values of $EGS^{s}(X)$.}\label{Tab: EGS}
\end{center}
\end{table}

From Table \ref{Tab: EGS}, we notice discrepancy between the estimated values of $EGS$ under the different scenarios; for instance, we got extreme values of risk during the period between $2008$ and early $2009$ which is characterized by an extremely bad economic situation, due to the crisis. This emphasizes the point argued in this paper about uncertainty regarding the choice of a decent distribution. Moreover, it highlights the fact that this uncertainty could lead to concerning financial losses. At this point, one could measure uncertainty using the proposed Deviation-based approach, see Table \ref{Tab: U}.
\begin{table}[!]
\begin{center}
\begin{tabular}{c|c|c|c|c|c|c}
  \hline
  $U$ & $FR$ & $LR$ & $UR$ & $Var$ & $SD$ & $Gini$\\\hline\hline
  $U_{\phi_{r,p,\lambda}}(X)$ & 6.87\% & 5.28\% & 1.59\% & 0.02\% & 1.32\% & 1.31\%  \\
  \hline
\end{tabular}\caption{Uncertainty measurement.}\label{Tab: U}
\end{center}
\end{table}
 This could be used for comparison purposes between different scenario-based SRM.\\
 Now, in order to overcome the uncertainty issue we consider the entire set of scenarios, and we superpose our scenario-based SRM with risk measures such as Expectation, Value-at-Risk and Expected Shortfall. The results are shown in Table \ref{Tab: RoEGS}.\footnote{$A$: for Average, the equiprobable version of $WA$.}
\begin{table}[!]
\begin{center}
\begin{tabular}{c|c|c|c|c|c}
  \hline
  $R$ & $WC$ & $A$ & $WA$ & $VaR$ & $ES$\\\hline\hline
  $R\,o\,\rho_{\phi_{r,p,\lambda}}(X)$ & 7.87\%
 & 2.59\% & 3.26\% & 5.33\% & 6.55\%   \\
  \hline
\end{tabular}\caption{Estimated values for examples of robust risk measures.}\label{Tab: RoEGS}
\end{center}
\end{table}

From Table \ref{Tab: RoEGS}, we notice that while $WC$ actually overestimates risk, $A$ and $WA$ do underestimate risk. Moreover, between $VaR$ and $ES$, as enoughly argued in the literature, $ES$ seems more convenient.

This empirical exercise, based on the daily returns for the NASDAQ index between $Jan. 01, 2007$ and $Mar. 31, 2019$, is a historical approach that stresses the potential risk inherent to the uncertainty issue regarding the choice of distribution; furthermore, it illustrates the robust framework of risk measures proposed in the present paper.\\

\makeatletter

\makeatother

\bibliographystyle{apalike}
\bibliography{references}

\begin{thebibliography}{}

\bibitem[Acerbi, 2002]{acerbi2002spectral}
Acerbi, C. (2002).
\newblock Spectral measures of risk: A coherent representation of subjective
  risk aversion.
\newblock {\em Journal of Banking \& Finance}, 26(7):1505--1518.
\newblock doi: \url{10.1016/S0378-4266(02)00281-9}.

\bibitem[Artzner et~al., 1999]{artzner1999coherent}
Artzner, P., Delbaen, F., Eber, J.-M., and Heath, D. (1999).
\newblock Coherent measures of risk.
\newblock {\em Mathematical finance}, 9(3):203--228.
\newblock doi: \url{10.1111/1467-9965.00068}.

\bibitem[Berkhouch et~al., 2018]{berkhouch2018extended}
Berkhouch, M., Lakhnati, G., and Righi, M.~B. (2018).
\newblock Extended gini-type measures of risk and variability.
\newblock {\em Applied Mathematical Finance}, 25(3):295--314.
\newblock doi: \url{10.1080/1350486X.2018.1538806}.

\bibitem[Ceriani and Verme, 2012]{ceriani2012origins}
Ceriani, L. and Verme, P. (2012).
\newblock The origins of the gini index: extracts from variabilit{\`a} e
  mutabilit{\`a} (1912) by corrado gini.
\newblock {\em The Journal of Economic Inequality}, 10(3):421--443.
\newblock doi: \url{10.1007/s10888-011-9188-x}.

\bibitem[Choquet, 1954]{choquet1954theory}
Choquet, G. (1954).
\newblock Theory of capacities.
\newblock In {\em Annales de l'institut Fourier}, volume~5, pages 131--295.
\newblock doi: \url{10.5802/aif.53}.

\bibitem[Delbaen, 2012]{delbaen2012monetary}
Delbaen, F. (2012).
\newblock {\em Monetary Utility Functions}.
\newblock Osaka University CSFI lecture notes series. 大阪大学出版会.

\bibitem[F{\"o}llmer and Schied, 2002]{follmer2002convex}
F{\"o}llmer, H. and Schied, A. (2002).
\newblock Convex measures of risk and trading constraints.
\newblock {\em Finance and stochastics}, 6(4):429--447.
\newblock doi: \url{10.1007/s007800200072}.

\bibitem[F{\"o}llmer and Schied, 2016]{föllmer2016stochastic}
F{\"o}llmer, H. and Schied, A. (2016).
\newblock {\em Stochastic Finance: An Introduction in Discrete Time}.
\newblock De Gruyter Textbook. De Gruyter.

\bibitem[Frittelli and Gianin, 2005]{frittelli2005law}
Frittelli, M. and Gianin, E.~R. (2005).
\newblock Law invariant convex risk measures.
\newblock In {\em Advances in mathematical economics}, pages 33--46. Springer.
\newblock doi: \url{10.1007/4-431-27233-X_2}.

\bibitem[Furman et~al., 2017]{furman2017gini}
Furman, E., Wang, R., and Zitikis, R. (2017).
\newblock Gini-type measures of risk and variability: Gini shortfall, capital
  allocations, and heavy-tailed risks.
\newblock {\em Journal of Banking \& Finance}, 83:70--84.
\newblock doi: \url{10.1016/j.jbankfin.2017.06.013}.

\bibitem[Giorgi, 1993]{giorgi1993fresh}
Giorgi, G.~M. (1993).
\newblock A fresh look at the topical interest.
\newblock {\em Metron}, 51(1-2):83--98.

\bibitem[Giorgi, 2005]{giorgi2005bibliographic}
Giorgi, G.~M. (2005).
\newblock Bibliographic portrait of the gini concentration ratio.
\newblock Technical report, University Library of Munich, Germany.

\bibitem[Jokhadze and Schmidt, 2018]{jokhadze2018measuring}
Jokhadze, V. and Schmidt, W.~M. (2018).
\newblock Measuring model risk in financial risk management and pricing.
\newblock {\em Available at SSRN 3113139}.
\newblock doi: \url{10.2139/ssrn.3113139 }.

\bibitem[Kusuoka, 2001]{kusuoka2001law}
Kusuoka, S. (2001).
\newblock On law invariant coherent risk measures.
\newblock In {\em Advances in mathematical economics}, pages 83--95. Springer.
\newblock doi: \url{10.1007/978-4-431-67891-5_4}.

\bibitem[Mao and Wang, 2018]{mao2018risk}
Mao, T. and Wang, R. (2018).
\newblock Risk aversion in regulatory capital principles.
\newblock {\em Available at SSRN 2658669}.
\newblock doi: \url{10.2139/ssrn.2658669 }.

\bibitem[Markowitz, 1952]{markowitz1952portfolio}
Markowitz, H. (1952).
\newblock Portfolio selection.
\newblock {\em The journal of finance}, 7(1):77--91.
\newblock doi: \url{10.1111/j.1540-6261.1952.tb01525.x}.

\bibitem[Markowitz, 1970]{markowitz1970portfolio}
Markowitz, H. (1970).
\newblock {\em Portfolio Selection: Efficient Diversification of Investments}.
\newblock Cowles Foundation Monograph Series. Yale University Press.

\bibitem[McNeil et~al., 2005]{alexander2005quantitative}
McNeil, A., Frey, R., and Embrechts, P. (2005).
\newblock {\em Quantitative Risk Management: Concepts, Techniques, and Tools}.
\newblock BusinessPro collection. Princeton University Press.

\bibitem[Pflug and R\"{o}misch, 2007]{pflug2007modeling}
Pflug, G. and R\"{o}misch, W. (2007).
\newblock {\em Modeling, Measuring and Managing Risk}.
\newblock World Scientific.

\bibitem[Righi, 2018a]{righi2018composition}
Righi, M.~B. (2018a).
\newblock A composition between risk and deviation measures.
\newblock {\em Annals of Operations Research}, pages 1--15.
\newblock doi: \url{10.1007/s10479-018-2913-0}.

\bibitem[Righi, 2018b]{Righi2018theory}
Righi, M.~B. (2018b).
\newblock A theory for combinations of risk measure.
\newblock {\em Available at arXiv:1807.01977}.
\newblock \url{arxiv.org/abs/1807.01977}.

\bibitem[Rockafellar et~al., 2006]{rockafellar2006generalized}
Rockafellar, R.~T., Uryasev, S., and Zabarankin, M. (2006).
\newblock Generalized deviations in risk analysis.
\newblock {\em Finance and Stochastics}, 10(1):51--74.
\newblock doi: \url{10.1007/s00780-005-0165-8}.

\bibitem[Schmeidler, 1986]{schmeidler1986integral}
Schmeidler, D. (1986).
\newblock Integral representation without additivity.
\newblock {\em Proceedings of the American mathematical society},
  97(2):255--261.
\newblock doi: \url{10.1090/S0002-9939-1986-0835875-8}.

\bibitem[Schmeidler, 1989]{schmeidler1989subjective}
Schmeidler, D. (1989).
\newblock Subjective probability and expected utility without additivity.
\newblock {\em Econometrica: Journal of the Econometric Society}, pages
  571--587.
\newblock doi: \url{10.2307/1911053}.

\bibitem[Shalit and Yitzhaki, 1984]{shalit1984mean}
Shalit, H. and Yitzhaki, S. (1984).
\newblock Mean-gini, portfolio theory, and the pricing of risky assets.
\newblock {\em The journal of Finance}, 39(5):1449--1468.
\newblock doi: \url{10.1111/j.1540-6261.1984.tb04917.x}.

\bibitem[Wang et~al., 2018]{wang2018characterization}
Wang, R., Wei, Y., and Willmot, G. (2018).
\newblock Characterization, robustness and aggregation of signed choquet
  integrals.
\newblock {\em Available at SSRN 2956962}.
\newblock doi: \url{10.2139/ssrn.2956962}.

\bibitem[Wang and Ziegel, 2018]{wang2018scenario}
Wang, R. and Ziegel, J.~F. (2018).
\newblock Scenario-based risk evaluation.
\newblock {\em Available at arXiv:1808.07339}.
\newblock \url{arxiv.org/abs/1808.07339}.

\bibitem[Yaari, 1987]{yaari1987dual}
Yaari, M.~E. (1987).
\newblock The dual theory of choice under risk.
\newblock {\em Econometrica: Journal of the Econometric Society}, pages
  95--115.
\newblock doi: \url{10.2307/1911158}.

\bibitem[Yitzhaki, 1983]{yitzhaki1983extension}
Yitzhaki, S. (1983).
\newblock On an extension of the gini inequality index.
\newblock {\em International economic review}, 24(3):617--628.

\bibitem[Yitzhaki, 1998]{yitzhaki1998more}
Yitzhaki, S. (1998).
\newblock More than a dozen alternative ways of spelling gini.
\newblock In {\em Research in economic inequality}. Citeseer.

\bibitem[Yitzhaki and Schechtman, 2005]{yitzhaki2005properties}
Yitzhaki, S. and Schechtman, E. (2005).
\newblock The properties of the extended gini measures of variability and
  inequality.
\newblock {\em Available at SSRN 815564}.
\newblock doi: \url{10.2139/ssrn.815564}.

\bibitem[Yitzhaki and Schechtman, 2012]{yitzhaki2012gini}
Yitzhaki, S. and Schechtman, E. (2012).
\newblock {\em The Gini Methodology: A Primer on a Statistical Methodology}.
\newblock Springer Series in Statistics. Springer New York.

\end{thebibliography}


\end{document}